\documentclass[11pt]{article}

\usepackage[a4paper,margin=1in]{geometry}
\usepackage{amsmath,amssymb,amsthm,mathtools}
\usepackage{bm}
\usepackage{microtype}
\usepackage[authoryear]{natbib}
\usepackage[hidelinks]{hyperref}
\usepackage[nameinlink,noabbrev]{cleveref}

\newcommand{\dd}{\,\mathrm d}
\newcommand{\E}{\mathbb E}

\newcommand{\Q}{\mathbb Q}
\newcommand{\G}{\mathcal G}

\newcommand{\Ito}{It\^o}

\newtheorem{assumption}{Assumption}[section]
\newtheorem{proposition}[assumption]{Proposition}
\newtheorem{remark}[assumption]{Remark}

\title{\Ito--Wentzell Formula
       and Dupire Stochastic PDE}
\author{Vladimir Lucic\\
\small Department of Mathematics, Imperial College London\\
\small \texttt{vlucic@ic.ac.uk}}
\date{July 2026}

\begin{document}
\maketitle

\begin{abstract}
Starting from the classic result of \citet{Ventzel1965},
we derive a conditional forward equation and an associated
stochastic Dupire PDE for a local-stochastic-volatility
model (LSV).  As an application, we obtain a density-weighted
Rao--Blackwell estimator for the leverage function in LSV.  We also derive
an SPDE for a rolling expiry vanilla option, in the spirit of the Musiela parametrization in interest rate modeling.
\end{abstract}

\medskip
\noindent\textbf{Keywords.}
Dupire equation; local stochastic volatility; Musiela parametrization;\\
Rao--Blackwellization.

\medskip
\noindent\textbf{MSC 2020.}
60H15, 60H30, 91G20, 91G60.

\section{Model and standing assumptions}

We work under a risk-neutral probability measure $\Q$ on a filtered probability
space supporting two independent Brownian motions $\xi$ and $\eta$.  Set
\[
\G_t:=\sigma(\xi_s:0\leq s\leq t),
\]
completed in the usual way.  Consider the local-stochastic-volatility (LSV)
model
\begin{equation}
\frac{\dd S_t}{S_t}
=b_t\dd t
+L(t,S_t)f(V_t)
\left(\rho\dd\xi_t+\sqrt{1-\rho^2}\dd\eta_t\right),
\qquad b_t=r_t-q_t,
\label{eq:lsv-model}
\end{equation}
where $r$ and $q$ are time-dependent deterministic, $L$ is a deterministic leverage function, and
\begin{equation}
\dd V_t=a(t,V_t)\dd t+\nu(t,V_t)\dd\xi_t.
\label{eq:variance-model}
\end{equation}
We take $S_0>0$ and $V_0\geq0$ to be deterministic.  Thus, under strong
existence for \eqref{eq:variance-model}, $V$ is adapted to the common-noise
filtration $(\G_t)_{t\geq0}$.
The stochastic-volatility model is obtained by taking $L\equiv1$ and
$f(v)=\sqrt v$.

\begin{assumption}\label{ass:regularity}
The coefficients in \cref{eq:lsv-model,eq:variance-model} ensure existence of
a non-explosive strong solution with $S_t>0$ and $V_t\geq0$.  Moreover,
$|\rho|<1$, and the conditional law of $S_t$ given $\G_t$ has a jointly
measurable density $p_t(s)$ for $t>0$.  The coefficients and solution satisfy
the integrability conditions needed for the conditional projections of the
stochastic integrals below.  Strong-form equations are understood under the
additional regularity required for the displayed spatial derivatives;
otherwise they are interpreted weakly.
\end{assumption}

For the call-transform calculations, we additionally assume
$\int_0^\infty s p_t(s)\dd s<\infty$ almost surely and the relevant fluxes
decay at infinity.  More precisely, setting
\[
J_t^\xi(s):=\rho sL(t,s)f(V_t)p_t(s),\quad
Q_t(s):=s^2L^2(t,s)f^2(V_t)p_t(s),\quad
J_t^b(s):=b_tsp_t(s),
\]
we assume, for every $K>0$,
\[
\lim_{s\to\infty}(s-K)J_t^\xi(s)=0,
\qquad
\lim_{s\to\infty}\bigl((s-K)\partial_sQ_t(s)-Q_t(s)\bigr)=0,
\qquad
\lim_{s\to\infty}(s-K)J_t^b(s)=0.
\]
Equivalently, the calculations may be justified by localization and passage
to the limit under corresponding uniform-integrability conditions.

The restriction $|\rho|<1$ is needed to avoid degenerate cases:  if $|\rho|=1$ and $S_0$ is
deterministic, then the conditional law of $S_t$ given $\G_t$ is generally a
Dirac measure.

\section{Conditional forward equation}

For a test function $\varphi\in C_c^2((0,\infty))$, write
\[
\langle p_t,\varphi\rangle
:=\int_0^\infty \varphi(s)p_t(s)\dd s.
\]
The following result is a direct application of the main result of \citet{Ventzel1965}.
\begin{proposition}[Conditional forward equation in the weak form]
Under \cref{ass:regularity}, the conditional density satisfies the weak SPDE
\begin{align}
\dd\langle p_t,\varphi\rangle
&=\left\langle p_t,
b_t s\varphi'(s)
+\frac12s^2L^2(t,s)f^2(V_t)\varphi''(s)
\right\rangle\dd t
\notag\\
&\quad+
\left\langle p_t,
\rho sL(t,s)f(V_t)\varphi'(s)
\right\rangle\dd\xi_t.
\label{eq:weak-forward}
\end{align}
If $p$ is sufficiently regular, then
\begin{align}
\dd p_t(s)
&=-\partial_s\!\left(\rho sL(t,s)f(V_t)p_t(s)\right)\dd\xi_t
\notag\\
&\quad+
\left[
\frac12\partial_{ss}\!\left(s^2L^2(t,s)f^2(V_t)p_t(s)\right)
-\partial_s\!\left(b_tsp_t(s)\right)
\right]\dd t.
\label{eq:strong-forward}
\end{align}
\end{proposition}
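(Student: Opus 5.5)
The plan is to apply \Ito's formula to $\varphi(S_t)$ and then take conditional expectations with respect to $\G_t$, i.e.\ to project onto the common-noise filtration. The drift and both martingale parts must then be handled separately. Write $\sigma_t:=L(t,S_t)f(V_t)$. \Ito's formula gives
\[
\varphi(S_t)=\varphi(S_0)+\int_0^t\Bigl(b_uS_u\varphi'(S_u)+\tfrac12S_u^2\sigma_u^2\varphi''(S_u)\Bigr)\dd u
+\int_0^t\rho S_u\sigma_u\varphi'(S_u)\dd\xi_u+\int_0^t\sqrt{1-\rho^2}\,S_u\sigma_u\varphi'(S_u)\dd\eta_u .
\]
Since $\langle p_t,\varphi\rangle=\E[\varphi(S_t)\mid\G_t]$, it remains to compute the conditional expectation of each term given $\G_t$.

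The key tool is the standard set of projection identities for the filtration generated by $\xi$ alone. Because $\eta$ is independent of $\xi$, the law of $(\xi,\eta)$ is a product measure, so conditioning on $\G_t$ amounts to integrating out $\eta$. This gives three facts.
\begin{itemize}
\item[(i)] For an adapted integrand $H$, $\E[\int_0^tH_u\dd u\mid\G_t]=\int_0^t\E[H_u\mid\G_u]\dd u$. This holds because $\xi$-increments after time $u$ are independent of $\mathcal F_u$, so $\E[H_u\mid\G_t]=\E[H_u\mid\G_u]$.
\item[(ii)] $\E[\int_0^tH_u\dd\xi_u\mid\G_t]=\int_0^t\E[H_u\mid\G_u]\dd\xi_u$.
\item[(iii)] $\E[\int_0^tH_u\dd\eta_u\mid\G_t]=0$.
\end{itemize}
I would prove (ii) and (iii) first for simple integrands $H=\sum h_k\mathbf 1_{(t_k,t_{k+1}]}$. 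For (ii), write $\E[h_k(\xi_{t_{k+1}}-\xi_{t_k})\mid\G_t]=(\xi_{t_{k+1}}-\xi_{t_k})\E[h_k\mid\G_t]$ and use $\E[h_k\mid\G_t]=\E[h_k\mid\G_{t_k}]$. For (iii), $\eta_{t_{k+1}}-\eta_{t_k}$ is independent of $\mathcal F_{t_k}\vee\G_t$ and has mean zero. Both identities then extend by $L^2$ isometry, using the integrability clause of \cref{ass:regularity}. This is precisely the content of the Wentzell result, so I would cite \citet{Ventzel1965} for it.

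Applying (i)--(iii), the $\eta$-integral vanishes. Since $V_u$ is $\G_u$-measurable, $f(V_u)$ factors out of the conditional expectations, and each conditional expectation becomes a pairing with $p_u$. For example,
\[
\E[\rho S_u\sigma_u\varphi'(S_u)\mid\G_u]=\langle p_u,\rho sL(u,s)f(V_u)\varphi'\rangle .
\]
This yields \eqref{eq:weak-forward} in integrated form. The strong form \eqref{eq:strong-forward} then follows by integrating by parts in $s$: once in the $\dd\xi$ term and in the $b$ term, and twice in the second-order term. Boundary terms vanish because $\varphi$ has compact support in $(0,\infty)$. Since $\varphi$ is arbitrary and $p$ is assumed regular enough, we can identify the integrands.

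The main obstacle is step (ii), the projection of the $\dd\xi$ integral, which is where the cross-filtration subtlety lies. $H_u$ is $\mathcal F_u$-measurable, not $\G_u$-measurable, and one must justify that the optional projection onto $\G$ can be taken inside the stochastic integral. The simple-function argument above handles this, provided we use that $(\xi_s-\xi_u)_{s\ge u}$ is independent of $\mathcal F_u$, which is a consequence of $\xi$ being an $(\mathcal F_t)$-Brownian motion. The other technical point is measurability of $u\mapsto\E[H_u\mid\G_u]$ as a progressive process; I would handle it by taking the optional projection version.
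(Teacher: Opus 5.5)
Your proposal is correct and follows essentially the same route as the paper: It\^o's formula for $\varphi(S_t)$, projection onto $\G_t$ using the identities for the $\dd\eta$- and $\dd\xi$-integrals, representation of the projected integrands through $p_u$ (with $f(V_u)$ factored out as $\G_u$-measurable), and integration by parts for the strong form. The only difference is that you prove the projection identities, via simple integrands and the independence of the post-$u$ increments of $\xi$ from $\mathcal F_u$; the paper simply asserts them under its integrability assumption.
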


\begin{proof}
Apply It\^o's formula to $\varphi(S_t)$ and take the optional projection onto
$(\G_t)_{t\geq0}$.  Under the stated integrability assumptions, the relevant
projection identities are
\[
\E\!\left[\int_0^t Z_u^\eta\dd\eta_u\,\middle|\,\G_t\right]=0,
\qquad
\E\!\left[\int_0^t Z_u^\xi\dd\xi_u\,\middle|\,\G_t\right]
=\int_0^t\E[Z_u^\xi\mid\G_u]\dd\xi_u.
\]
The conditional expectations of the drift and common-noise integrands are
then represented using $p_u$.  This gives \cref{eq:weak-forward}; integration
by parts gives \cref{eq:strong-forward}.
\end{proof}

Intuitively, this is the stochastic forward equation obtained by conditioning
on the entire common-noise path $\xi$, which also determines the volatility
path.  When $\rho=0$, the stochastic transport term disappears and the result
is a pathwise random forward Kolmogorov (Fokker--Planck) PDE.  It becomes deterministic only when
the remaining coefficients are deterministic.

\section{Conditional call transform and stochastic Dupire equation}

Fix a time horizon $T>0$ and define the conditional call transform
\begin{equation}
c_t(K):=\E\!\left[(S_t-K)^+\mid\G_t\right]
=\int_K^\infty(s-K)p_t(s)\dd s, \quad K\geq0,\quad 0<t\leq T,
\label{eq:conditional-call}
\end{equation}
with initial condition $c_0(K)=(S_0-K)^+$.  The density identity at $t=0$
is understood distributionally, with $p_0=\delta_{S_0}$.
Thus $t$ is both the maturity variable and the endpoint of the conditioning
filtration; $c_t$ is not a fixed-calendar-expiry option-price process.  Under
the stated regularity, for $K>0$,
\begin{equation}
\partial_Kc_t(K)=-\int_K^\infty p_t(s)\dd s,
\qquad
\partial_{KK}c_t(K)=p_t(K),
\label{eq:call-derivatives}
\end{equation}
and
\begin{equation}
\int_K^\infty sp_t(s)\dd s=c_t(K)-K\partial_Kc_t(K).
\label{eq:tail-identity}
\end{equation}

\begin{proposition}[Conditional stochastic Dupire equation]
For $K>0$, the maturity-indexed conditional call transform
\eqref{eq:conditional-call} satisfies
\begin{align}
\dd c_t(K)
&=\rho f(V_t)
\left(\int_K^\infty sL(t,s)p_t(s)\dd s\right)\dd\xi_t
\notag\\
&\quad+
\left[
\frac12K^2L^2(t,K)f^2(V_t)\partial_{KK}c_t(K)
+b_t\bigl(c_t(K)-K\partial_Kc_t(K)\bigr)
\right]\dd t.
\label{eq:conditional-dupire}
\end{align}
\end{proposition}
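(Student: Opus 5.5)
The plan is to test the weak forward equation \eqref{eq:weak-forward} against the payoff $\varphi_K(s)=(s-K)^+$, then identify each term using \eqref{eq:call-derivatives} and \eqref{eq:tail-identity}. Since $\varphi_K$ is neither $C^2$ nor compactly supported, I will not apply \eqref{eq:weak-forward} to it directly. Instead I will proceed in one of two equivalent ways. The first is to integrate the strong form \eqref{eq:strong-forward} against $(s-K)$ over $s\in(K,\infty)$ and exchange the $\dd t$, $\dd\xi_t$ differentials with the $s$-integral by a stochastic Fubini argument. The second is to apply \eqref{eq:weak-forward} to mollified and truncated test functions $\varphi_{K,\varepsilon,R}\in C_c^2((0,\infty))$ with $\varphi_{K,\varepsilon,R}\to(s-K)^+$, and pass to the limit. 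I will present the first route, since the decay assumptions stated after \cref{ass:regularity} are tailored to it; the second route justifies it under the localization and uniform-integrability remark.

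\textbf{Noise term.} Integrating by parts,
\[
-\int_K^\infty(s-K)\partial_sJ_t^\xi(s)\dd s
=-\bigl[(s-K)J^\xi_t(s)\bigr]_K^\infty+\int_K^\infty J_t^\xi(s)\dd s .
\]
The boundary term vanishes at $s=K$ trivially and at $\infty$ by the assumption $\lim_{s\to\infty}(s-K)J_t^\xi(s)=0$. What remains is $\rho f(V_t)\int_K^\infty sL(t,s)p_t(s)\dd s$. Here $f(V_t)$ factors out because $V_t$ is $\G_t$-measurable. This gives the $\dd\xi_t$ coefficient in \eqref{eq:conditional-dupire}.

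\textbf{Drift term.} The transport part is handled the same way:
\[
-\int_K^\infty(s-K)\partial_sJ^b_t\dd s=\int_K^\infty b_t s p_t(s)\dd s
=b_t\bigl(c_t(K)-K\partial_Kc_t(K)\bigr).
\]
The boundary term again vanishes by the stated limit, and the last equality is \eqref{eq:tail-identity}.

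\textbf{Diffusion term.} This needs two integrations by parts:
\[
\frac12\int_K^\infty(s-K)\partial_{ss}Q_t\dd s
=\frac12\Bigl[(s-K)\partial_sQ_t-Q_t\Bigr]_K^\infty
=\frac12 Q_t(K).
\]
The upper limit vanishes by the assumption on $(s-K)\partial_sQ_t-Q_t$. At the lower limit, $(s-K)\partial_sQ_t-Q_t$ equals $-Q_t(K)$, and subtracting it gives $+\tfrac12 Q_t(K)$. Finally $Q_t(K)=K^2L^2(t,K)f^2(V_t)p_t(K)$, and by \eqref{eq:call-derivatives} this equals $K^2L^2(t,K)f^2(V_t)\partial_{KK}c_t(K)$.

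Collecting the three terms, and using $c_0(K)=(S_0-K)^+$ as the initial condition, yields \eqref{eq:conditional-dupire} in integrated form.

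\textbf{Main obstacle.} The algebra is routine. The delicate step is justifying the interchange of $\int_K^\infty(s-K)\,\cdot\,\dd s$ with the stochastic differential, that is, a stochastic Fubini theorem for the $\dd\xi_t$ integral over an unbounded $s$-domain, together with the behaviour near $t=0$, where $p_0=\delta_{S_0}$. I would handle this by working with the integrated form on $[\epsilon,t]$ and letting $\epsilon\downarrow0$. For the Fubini step I would first truncate to $s\le R$ with a smooth cutoff. On the truncated domain, \eqref{eq:weak-forward} applies directly with a mollified kink at $K$: as the mollification is removed, its second derivative tends to $\delta_K$, which produces $\tfrac12 Q_t(K)$ because $p_t$ is continuous. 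I would then let $R\to\infty$, using the flux decay hypotheses to kill the cutoff contributions, and use the integrability part of \cref{ass:regularity} to get $L^2$ convergence of the stochastic integrals.
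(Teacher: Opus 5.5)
Your proposal is correct and follows the paper's own argument. You integrate the strong forward equation \eqref{eq:strong-forward} against $(s-K)^+$ (equivalently, localize by $C_c^2$ test functions and pass to the limit), using one integration by parts with the flux-decay conditions for the $\dd\xi_t$ and transport terms, two for the second-order term to obtain $\tfrac12 Q_t(K)$, and then \eqref{eq:tail-identity} and \eqref{eq:call-derivatives} to rewrite everything in terms of $c_t$. Your explicit treatment of the stochastic Fubini interchange and of the $t\downarrow0$ limit only spells out what the paper leaves to its integrability assumptions.
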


\begin{proof}
First localize $(s-K)^+$ by smooth compactly supported test functions and then
pass to the limit using the stated integrability and boundary conditions.
Equivalently, in the strong formulation, integrate
\cref{eq:strong-forward} against $(s-K)^+$.  For the common-noise term, 
integration by parts gives
\[
-\int_K^\infty(s-K)
\partial_s\!\left(\rho sL(t,s)f(V_t)p_t(s)\right)\dd s
=\rho f(V_t)\int_K^\infty sL(t,s)p_t(s)\dd s.
\]
Two integrations by parts in the second-order term give its value at $s=K$,
and \cref{eq:tail-identity} gives the transport term.
\end{proof}

For $L\equiv1$ and $f(v)=\sqrt v$, \cref{eq:conditional-dupire} reduces to
\begin{align}
\dd c_t(K)
&=\rho\sqrt{V_t}\bigl(c_t(K)-K\partial_Kc_t(K)\bigr)\dd\xi_t
\notag\\
&\quad+
\left[
\frac12K^2V_t\partial_{KK}c_t(K)
+b_t\bigl(c_t(K)-K\partial_Kc_t(K)\bigr)
\right]\dd t.
\label{eq:sv-conditional-dupire}
\end{align}

Note that in the absence of common noise, the above SPDE reduces to the
classical  Dupire forward PDE.

\section{Rao--Blackwellized leverage calibration}
We start by recalling the standard Dupire projection identity for the
leverage function in LSV, originally due to \citet{Lipton2002}.
Let
\[
\bar p_T(K):=\E[p_T(K)]
\]
be a version of the unconditional density of $S_T$.  Conditioning gives
\begin{align}
\E\!\left[f^2(V_T)p_T(K)\right]
&=\bar p_T(K)
\E\!\left[f^2(V_T)\mid S_T=K\right].
\label{eq:density-disintegration}
\end{align}
 Taking expectations in
\cref{eq:weak-forward}, gives 
\begin{equation}
\sigma_{\mathrm{loc}}^2(T,K)
=L^2(T,K)\Theta(T,K),
\qquad
\Theta(T,K):=\E\!\left[f^2(V_T)\mid S_T=K\right].
\label{eq:projection}
\end{equation}

Let $C(T,K)$ be the time-zero discounted market call price surface.  For
deterministic rates and dividend yield, the Dupire local variance is
\begin{equation}
\sigma_{\mathrm{Dup}}^2(T,K)
:=
\frac{2\left[
\partial_TC(T,K)+q_TC(T,K)+(r_T-q_T)K\partial_KC(T,K)
\right]}
{K^2\partial_{KK}C(T,K)}.
\label{eq:dupire-local-variance}
\end{equation}
Matching the LSV marginals to the market
surface gives the standard projection identity
\begin{equation}
L^2(T,K)
=\frac{\sigma_{\mathrm{Dup}}^2(T,K)}
{\E[f^2(V_T)\mid S_T=K]},
\label{eq:leverage-identity}
\end{equation}
completing the derivation of the expression for the leverage function.
 Since its denominator depends on the LSV law, this is a nonlinear
fixed-point relation for the leverage function.

Fix a deterministic candidate leverage function $L$. For the $i$th simulated
common-noise path $\xi^i$, let $V^i$ be the associated volatility path obtained from \eqref{eq:variance-model}, and let
\[
p^i(T,K):=p^{\,\xi^i}_T(K)
\]
be the conditional density obtained by solving \cref{eq:weak-forward} or
\cref{eq:strong-forward} along that same path.

\begin{proposition}[Density-weighted estimator]
Suppose that $0<\E[p_T(K)]<\infty$, and that
$\E[f^2(V_T)p_T(K)]<\infty$.  Then a consistent estimator of
$\Theta(T,K)$ in \cref{eq:projection} is
\begin{equation}
\widehat\Theta_N(T,K)
=\frac{\displaystyle\sum_{i=1}^N
f^2(V_T^i)p^i(T,K)}
{\displaystyle\sum_{i=1}^Np^i(T,K)}.
\label{eq:rb-estimator}
\end{equation}
The corresponding leverage update is
\begin{equation}
\widehat L_N^2(T,K)
=\frac{\sigma_{\mathrm{Dup}}^2(T,K)}
{\widehat\Theta_N(T,K)}.
\label{eq:leverage-update}
\end{equation}
\end{proposition}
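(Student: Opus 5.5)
The estimator \cref{eq:rb-estimator} is a ratio of two sample means, so the plan is to apply the strong law of large numbers separately to numerator and denominator and then combine them with the continuous mapping theorem. Throughout, fix $(T,K)$ with $K>0$. I take the simulated common-noise paths $\xi^1,\dots,\xi^N$ to be i.i.d.\ copies of $\xi$. Because the volatility path $V^i$ is a measurable functional of $\xi^i$ (strong existence in \cref{ass:regularity}), and $p^i(T,K)$ is the conditional density along that path, the pairs $(f^2(V_T^i)p^i(T,K),\,p^i(T,K))$ are i.i.d.\ copies of $(f^2(V_T)p_T(K),\,p_T(K))$. I would state this explicitly as the simulation assumption. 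I would also note that $p^i(T,K)$ is taken to be the exact conditional density, a jointly measurable version as in \cref{ass:regularity}; discretization error from the SPDE solver is outside the statement.

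\textbf{Step 1: apply the SLLN to each sum.} Both variables are nonnegative. By hypothesis, $\E[p_T(K)]=\bar p_T(K)\in(0,\infty)$ and $\E[f^2(V_T)p_T(K)]<\infty$. Kolmogorov's strong law therefore gives, almost surely,
\[
\frac1N\sum_{i=1}^N p^i(T,K)\to\bar p_T(K),\qquad
\frac1N\sum_{i=1}^N f^2(V_T^i)p^i(T,K)\to\E\!\left[f^2(V_T)p_T(K)\right].
\]

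\textbf{Step 2: form the ratio.} The denominator limit is strictly positive. Hence, almost surely, the denominator of \cref{eq:rb-estimator} is positive for all large $N$, and the map $(x,y)\mapsto x/y$ is continuous at the limit point. It follows that $\widehat\Theta_N(T,K)\to\E[f^2(V_T)p_T(K)]/\bar p_T(K)$ almost surely.

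\textbf{Step 3: identify the limit.} By the disintegration identity \cref{eq:density-disintegration}, this limit equals $\E[f^2(V_T)\mid S_T=K]=\Theta(T,K)$. This step is the main point needing care. To justify \cref{eq:density-disintegration}, I would take $\phi\ge0$ measurable and compute $\E[f^2(V_T)\phi(S_T)]$ by conditioning on $\G_T$. Since $V_T$ is $\G_T$-measurable, this gives $\E\!\left[f^2(V_T)\int\phi(s)p_T(s)\dd s\right]$, which by Tonelli equals $\int\phi(K)\,\E[f^2(V_T)p_T(K)]\dd K$. Comparing with the definition of regular conditional expectation shows that $K\mapsto\E[f^2(V_T)p_T(K)]/\bar p_T(K)$ is a version of $\Theta(T,\cdot)$. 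This holds Lebesgue-a.e.\ in $K$, or pointwise if one uses the density-defined version, which is the convention implicit in \eqref{eq:projection}. So the estimator is strongly consistent.

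\textbf{Step 4: the leverage update.} Consistency of \cref{eq:leverage-update} follows from continuity of $x\mapsto\sigma_{\mathrm{Dup}}^2(T,K)/x$ at $\Theta(T,K)$. This requires $\Theta(T,K)>0$, which holds whenever $f^2(V_T)>0$ with positive probability on the relevant event. Combining this with \cref{eq:leverage-identity}, $\widehat L_N^2(T,K)$ converges almost surely to the fixed-point map evaluated at the candidate $L$.
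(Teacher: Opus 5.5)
Your proposal is correct and follows the same route as the paper. You write $\Theta(T,K)=\E[f^2(V_T)p_T(K)]/\E[p_T(K)]$ via the disintegration identity, apply the strong law of large numbers to numerator and denominator, and use $\E[p_T(K)]>0$ to pass to the ratio. You also fill in details the paper leaves implicit: you derive the disintegration identity by conditioning on $\G_T$ and applying Tonelli (valid for Lebesgue-a.e.\ $K$ or for the density-defined version), and you state the i.i.d.\ simulation assumption explicitly.
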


\begin{proof}
By \cref{eq:density-disintegration},
\[
\Theta(T,K)
=\frac{\E[f^2(V_T)p_T(K)]}{\E[p_T(K)]}.
\]
The strong law of large numbers applied to the numerator and denominator,
together with $\E[p_T(K)]>0$, gives \cref{eq:rb-estimator}.
\end{proof}

This estimator integrates out the idiosyncratic spot noise by solving a
one-dimensional conditional density equation along each common-noise path.
It is therefore a Rao--Blackwellized alternative
to estimating the level-set conditional
expectation directly from spot particles (e.g.\ \citet[Section~4.2]{RobertCasella2004}).  
In practice, \cref{eq:leverage-update} is applied by time marching, with the
conditional densities and leverage function updated jointly at each $T$.

\section{Fixed-time-to-expiry and the Musiela parametrization}

The conditional Dupire equation derived earlier evolves the object
$c_t(K)=\E[(S_t-K)^+\mid\G_t]$, for which the maturity and conditioning
point coincide.  In this section we focus on a rolling
time to expiry option $\E\!\left[(S_{t+\tau}-K)^+\mid\G_t\right]$, where
$\tau:=T-t$ is the time-to-expiry.  The resulting
change of variables is analogous to the Musiela parametrization of the
forward-rate curve in the HJM model.

The following proposition is a key technical result needed in the sequel,
and is a direct application of the It\^o--Wentzell formula.
\begin{proposition}[Musiela reparametrization]
Let $X_t(T,K)$, $0\leq t\leq T$, be an adapted random field such that, for each
fixed $T$ and $K$,
\begin{equation}
\dd X_t(T,K)=A_t(T,K)\dd t+B_t(T,K)\dd\xi_t.
\label{eq:fixed-expiry-field}
\end{equation}
Assume that the random field has the regularity required by the
It\^o--Wentzell formula and is continuously differentiable in $T$.  Define
\begin{equation}
U_t(\tau,K):=X_t(t+\tau,K),
\qquad \tau\geq0.
\label{eq:musiela-field}
\end{equation}
Then
\begin{align}
\dd U_t(\tau,K)
&=B_t(t+\tau,K)\dd\xi_t
\notag\\
&\quad+
\left[A_t(t+\tau,K)+\partial_\tau U_t(\tau,K)\right]\dd t.
\label{eq:musiela-dynamics}
\end{align}
\end{proposition}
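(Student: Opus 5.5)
The plan is to apply the \Ito--Wentzell formula to the random field $T\mapsto X_t(T,K)$, for fixed $K$ and fixed $\tau$, evaluated along the deterministic, absolutely continuous process $\theta_t:=t+\tau$. For a fixed parameter $T$, the field $x\mapsto X_t(x,K)$ has semimartingale dynamics \eqref{eq:fixed-expiry-field} in $t$. The \Ito--Wentzell formula of \citet{Ventzel1965} then gives, for a semimartingale $\theta_t$,
\[
\dd X_t(\theta_t,K)=A_t(\theta_t,K)\dd t+B_t(\theta_t,K)\dd\xi_t+\partial_TX_t(\theta_t,K)\dd\theta_t+\tfrac12\partial_{TT}X_t(\theta_t,K)\dd\langle\theta\rangle_t+\partial_TB_t(\theta_t,K)\dd\langle\xi,\theta\rangle_t .
\]
Here $\dd\theta_t=\dd t$, and $\theta$ has finite variation, so both $\dd\langle\theta\rangle_t$ and $\dd\langle\xi,\theta\rangle_t$ vanish. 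Consequently only first-order differentiability in $T$ is needed, which is exactly the stated assumption. No second $T$-derivative of $X$ and no $T$-derivative of $B$ enter.

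Next I identify the resulting drift with the Musiela form. By definition \eqref{eq:musiela-field}, $U_t(\tau,K)=X_t(t+\tau,K)$. Differentiating in $\tau$ at fixed $t$ gives $\partial_\tau U_t(\tau,K)=\partial_TX_t(t+\tau,K)$. Substituting this into the drift above yields $A_t(t+\tau,K)+\partial_\tau U_t(\tau,K)$, while the diffusion coefficient is $B_t(t+\tau,K)$. This is \eqref{eq:musiela-dynamics}.

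For a self-contained check, which also indicates where regularity is used, I would verify the formula directly in integral form. Write
\[
U_t(\tau,K)-U_0(\tau,K)=\bigl[X_t(t+\tau,K)-X_0(t+\tau,K)\bigr]+\bigl[X_0(t+\tau,K)-X_0(\tau,K)\bigr].
\]
The second bracket equals $\int_0^t\partial_TX_0(u+\tau,K)\dd u$. For the first bracket, apply \eqref{eq:fixed-expiry-field} at the fixed expiry $T'=t+\tau$. Then rewrite $A_s(t+\tau)=A_s(s+\tau)+\int_s^t\partial_TA_s(u+\tau)\dd u$, and similarly for $B$. Exchanging the order of integration with the stochastic Fubini theorem recombines the correction terms into $\int_0^t\bigl(\partial_TX_u(u+\tau,K)-\partial_TX_0(u+\tau,K)\bigr)\dd u$. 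This uses $\partial_TX_u=\partial_TX_0+\int_0^u\partial_TA\dd s+\int_0^u\partial_TB\dd\xi$, and combining it with the second bracket gives the result.

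The main obstacle is justifying these interchanges. The stochastic Fubini step, or equivalently the differentiability of $X$ in $T$ commuting with the stochastic integral, is the point that requires care. I would not prove it here; I would absorb it into the hypothesis ``the regularity required by the \Ito--Wentzell formula''. Under that hypothesis, the direct citation in the first step suffices and the remainder is bookkeeping.
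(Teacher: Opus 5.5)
Your proposal is correct and follows the paper's proof. Like the paper, you apply the It\^o--Wentzell formula along the finite-variation maturity path $T_t=t+\tau$, note that the quadratic-variation and cross-variation corrections vanish, and identify $\partial_T X_t(t+\tau,K)$ with $\partial_\tau U_t(\tau,K)$. Your supplementary stochastic-Fubini check is also correct, but it needs more than the stated hypothesis: it requires differentiability of $A$ and $B$ in $T$ and the identity $\partial_T X_u=\partial_T X_0+\int_0^u\partial_T A\dd s+\int_0^u\partial_T B\dd\xi$. You rightly flag that this extra regularity has to be absorbed into the assumption.
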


\begin{proof}
The maturity process $T_t=t+\tau$ is of finite variation.  Hence there is no
quadratic-variation correction in the $T$-direction and no cross-variation
between $T$ and $\xi$.  The It\^o--Wentzell formula
\citep[Chapter~3]{Kunita1990} therefore gives
\[
\dd X_t(T_t,K)
=\dd X_t(T,K)|_{T=T_t}
+\partial_TX_t(T_t,K)\dd t,
\]
which is \cref{eq:musiela-dynamics} because
$\partial_TX_t(t+\tau,K)=\partial_\tau U_t(\tau,K)$.
\end{proof}

\begin{proposition}[Rolling conditional option value]
Fix $\tau\geq0$ and $K\geq0$, and suppose
$(S_{t+\tau}-K)^+\in L^2$ for the times under consideration.  Define the
rolling conditional payoff field directly by
\begin{equation}
U_t^{\mathrm u}(\tau,K)
:=\E\!\left[(S_{t+\tau}-K)^+\mid\G_t\right].
\label{eq:rolling-conditional-payoff}
\end{equation}
Assume the random-field regularity needed for the Musiela reparametrization
in the preceding proposition.  Then there is a predictable field
$\beta_t^{\mathrm u}(\tau,K)$ such that
\begin{equation}
\dd U_t^{\mathrm u}(\tau,K)
=\partial_\tau U_t^{\mathrm u}(\tau,K)\dd t
+\beta_t^{\mathrm u}(\tau,K)\dd\xi_t.
\label{eq:rolling-undiscounted-field}
\end{equation}
In particular, $U_t^{\mathrm u}(0,K)=c_t(K)$.

For the corresponding rolling discounted partial-information option value,
let
\[
D(t,t+\tau):=\exp\!\left(-\int_t^{t+\tau}r_u\dd u\right)
\]
and define
\begin{equation}
U_t^P(\tau,K)
:=D(t,t+\tau)U_t^{\mathrm u}(\tau,K)
=\E\!\left[D(t,t+\tau)(S_{t+\tau}-K)^+\mid\G_t\right].
\label{eq:rolling-discounted-value}
\end{equation}
Then
\begin{equation}
\dd U_t^P(\tau,K)
=\left(\partial_\tau U_t^P(\tau,K)+r_tU_t^P(\tau,K)\right)\dd t
+D(t,t+\tau)\beta_t^{\mathrm u}(\tau,K)\dd\xi_t.
\label{eq:rolling-discounted-field}
\end{equation}
\end{proposition}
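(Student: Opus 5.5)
The plan is to build the rolling field from a fixed-expiry field and then apply the Musiela reparametrization proposition. Fix $T$ and $K$ and set
\[
X_t(T,K):=\E\!\left[(S_T-K)^+\mid\G_t\right],\qquad 0\le t\le T .
\]
By construction $U_t^{\mathrm u}(\tau,K)=X_t(t+\tau,K)$. Since $(S_T-K)^+\in L^2$, the process $t\mapsto X_t(T,K)$ is a square-integrable $(\G_t)$-martingale. The filtration $\G$ is the completed natural filtration of the Brownian motion $\xi$, so the martingale representation theorem gives a predictable $B_t(T,K)$ with
\[
\dd X_t(T,K)=B_t(T,K)\dd\xi_t .
\]
This is \cref{eq:fixed-expiry-field} with $A\equiv0$. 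The absence of drift is the key structural point: the tower property makes every fixed-expiry conditional payoff a martingale in the conditioning time.

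Next, under the assumed random-field regularity (joint measurability of $B$ in $(t,T)$ and $C^1$ dependence of $X$ on $T$), I will apply the Musiela reparametrization proposition. It yields \cref{eq:rolling-undiscounted-field} with $\beta_t^{\mathrm u}(\tau,K):=B_t(t+\tau,K)$, which is predictable as a composition of a jointly measurable predictable field with the deterministic map $t\mapsto t+\tau$. Setting $\tau=0$ gives $U_t^{\mathrm u}(0,K)=\E[(S_t-K)^+\mid\G_t]=c_t(K)$ by \cref{eq:conditional-call}.

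\textbf{The main obstacle} is choosing a version of $B_t(T,K)$ that is jointly measurable, and regular enough in $T$, for the It\^o--Wentzell formula to apply. The statement delegates this to the regularity hypothesis, so I will only remark that one can take the Clark--Ocone or Kunita--Watanabe version and rely on the assumed differentiability in $T$.

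For the discounted field, $D(t,t+\tau)$ is deterministic and $C^1$ in $t$ and $\tau$. Writing $D(t,t+\tau)=\exp(-\int_t^{t+\tau}r_u\dd u)$ gives
\[
\partial_tD=(r_t-r_{t+\tau})D,\qquad \partial_\tau D=-r_{t+\tau}D .
\]
The product rule has no covariation term because $D$ is of finite variation, so
\[
\dd U^P_t=D\,\dd U^{\mathrm u}_t+(r_t-r_{t+\tau})D\,U^{\mathrm u}_t\dd t .
\]
On the other hand,
\[
\partial_\tau U^P=D\,\partial_\tau U^{\mathrm u}-r_{t+\tau}D\,U^{\mathrm u}.
\]
Substituting, the drift becomes $D\,\partial_\tau U^{\mathrm u}+(r_t-r_{t+\tau})D\,U^{\mathrm u}=\partial_\tau U^P+r_tU^P$, and the diffusion coefficient is $D\,\beta^{\mathrm u}$. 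This is \cref{eq:rolling-discounted-field}. Finally, the second expression in \cref{eq:rolling-discounted-value} holds because $D$ is deterministic and can be moved inside the conditional expectation.
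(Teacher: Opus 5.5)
Your proof is correct. The undiscounted part is the paper's argument: the fixed-maturity conditional payoff $X_t(T,K)$, the tower property, Brownian martingale representation on the $\xi$-filtration, then the Musiela reparametrization with $A\equiv0$ and $\beta^{\mathrm u}_t(\tau,K)=B_t(t+\tau,K)$.

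For the discounted field you take a different, equally valid route. You apply the ordinary product rule directly in rolling coordinates to $D(t,t+\tau)\,U^{\mathrm u}_t(\tau,K)$, using $\partial_tD=(r_t-r_{t+\tau})D$ and $\partial_\tau D=-r_{t+\tau}D$, so that the two $r_{t+\tau}$ contributions cancel. The paper instead returns to fixed maturity. It forms $P_s(T,K)=D(s,T)X^{\mathrm u}_s(T,K)$, notes $\dd P_s=r_sP_s\dd s+D(s,T)B_s\dd\xi_s$, and invokes the Musiela proposition a second time, now with nonzero drift $A_s=r_sP_s$.

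The paper's route keeps all the calculus in fixed-maturity coordinates. There the $r_tU^P_t$ drift is simply the fixed-maturity drift (the discounted price accrues at the short rate), and no $r_{t+\tau}$ terms ever appear. The cost is that the It\^o--Wentzell regularity must also hold for the discounted field. Your route uses It\^o--Wentzell only once and otherwise needs just the elementary product rule for a deterministic finite-variation factor. The cost is the explicit cancellation bookkeeping.

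Both arguments implicitly require $r$ to be continuous, so that $D$ is differentiable in both arguments. The statement itself already presupposes this through the term $\partial_\tau U^P$.
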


\begin{proof}
For an auxiliary fixed maturity $T$, define
\[
X_s^{\mathrm u}(T,K)
:=\E\!\left[(S_T-K)^+\mid\G_s\right],
\qquad 0\leq s\leq T.
\]
The tower property makes
$X^{\mathrm u}(T,K)$ a $\G_s$-martingale.  The Brownian martingale
representation theorem applies because $(\G_s)$ is the augmented filtration
generated by $\xi$, so for some predictable $B_s^{\mathrm u}(T,K)$,
\[
\dd X_s^{\mathrm u}(T,K)=B_s^{\mathrm u}(T,K)\dd\xi_s.
\]
Since $U_t^{\mathrm u}(\tau,K)=X_t^{\mathrm u}(t+\tau,K)$, setting
$\beta_t^{\mathrm u}(\tau,K):=B_t^{\mathrm u}(t+\tau,K)$ and applying
\cref{eq:musiela-dynamics} with $A=0$ gives
\eqref{eq:rolling-undiscounted-field}.

For the discounted result, introduce only in the proof the fixed-maturity
field
\[
P_s(T,K):=D(s,T)X_s^{\mathrm u}(T,K),
\qquad
D(s,T):=\exp\!\left(-\int_s^T r_u\dd u\right).
\]
Since $\partial_sD(s,T)=r_sD(s,T)$ for fixed $T$,
\[
\dd P_s(T,K)
=r_sP_s(T,K)\dd s+D(s,T)B_s^{\mathrm u}(T,K)\dd\xi_s.
\]
Now set $s=t$ and $T=t+\tau$ and apply
\cref{eq:musiela-dynamics} once more to obtain
\eqref{eq:rolling-discounted-field}.
\end{proof}

\begin{remark}[Fixed calendar expiry]
Let $T^*$ be fixed and set $\tau_t=T^*-t$.  Applying It\^o--Wentzell to
$U_t(\tau_t,K)$ produces the additional term
$-\partial_\tau U_t(\tau_t,K)\dd t$, which cancels the transport term in
\cref{eq:musiela-dynamics} and recovers \cref{eq:fixed-expiry-field}.
\end{remark}

\section{Conclusion}

The conditional forward SPDE  due to \citet{Ventzel1965} provides a direct route to a stochastic Dupire equation and a density-weighted LSV
calibration procedure.
The result of \citet{Ventzel1965} also serves as a basis for deriving a Musiela-like result for rolling options price processes. The results are applicable to a wide range of LSV models, including the classic stochastic volatility model.

\bibliographystyle{plainnat}
\bibliography{references}

\end{document}